\newtheorem{theorem}{Theorem}
\newtheorem{lemma}[theorem]{Lemma}
\newtheorem{corollary}[theorem]{Corollary}
\newtheorem{definition}[theorem]{Definition}
\begin{document}

\title{An Approach to Circuit Lower Bounds via Bounded Width Circuits}

\author{Hiroki Morizumi\\
{\small Shimane University, Japan}\\
{\small morizumi@cis.shimane-u.ac.jp}
}

\date{}

\maketitle

\begin{abstract}
In this paper, we investigate an approach to circuit lower bounds
via bounded width circuits.
The approach consists of two steps:
(i) We convert circuits to (deterministic or nondeterministic)
bounded width circuits.
(ii) We prove lower bounds for the bounded width circuits.

For the second step, we prove that there is an explicit
Boolean function $f$ as follows:
If a nondeterministic circuit of size $s$ and width $w$ computes $f$, then
$w = \Omega(\frac{n^4}{4^{\frac{s}{n}}s^3}) - \frac{\log_2 s}{2}$.
For the first step, we give some observations, which include
a relation between conversions to bounded width circuits and
a standard pebble game.
\end{abstract}

\section{Introduction}

Proving circuit lower bounds is one of the central topics in computational
complexity theory.
In this paper, we propose and investigate an approach to circuit lower bounds
via bounded width circuits.
The approach consists of two steps.

\begin{enumerate}
\item We convert circuits to (deterministic or nondeterministic)
      bounded width circuits.
\item We prove lower bounds for the bounded width circuits.
\end{enumerate}

\noindent In this paper, we give some results for each step.

\medskip

\noindent {\bf Note.}
The definition of the width in circuits is given in Section~\ref{subsec:circ}.
We consider bounded width circuits also for the nondeterministic case.
To the best of our knowledge, there has been no paper which considered
nondeterministic bounded width circuits, and note our definition
of nondeterministic bounded width circuits.
See Section~\ref{subsec:nbwc} for the details.

\subsection{Lower Bounds for Bounded Width Circuits} \label{subsec:intro_lb}

We firstly prove lower bounds for bounded width circuits, which are
related to the second step of the approach.
Our result for lower bounds is the following one theorem.

\begin{theorem} \label{thrm:lb}
There is an explicit Boolean function $f$ as follows: If a nondeterministic
circuit of size $s$ and width $w$ computes $f$, then
$$w = \Omega(\frac{n^4}{4^{\frac{s}{n}}s^3}) - \frac{\log_2 s}{2}.$$
\end{theorem}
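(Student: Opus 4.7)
The plan is to prove this lower bound by a Nechiporuk-style subfunction-counting argument, adapted to the setting of nondeterministic bounded-width circuits. First I would choose $f$ to be an explicit function with high subfunction complexity under a balanced partition of the input variables---for instance, a variant of element distinctness or indirect storage access---so that for some partition of the $n$ input variables into blocks, the total (log of the) number of distinct subfunctions obtained by fixing variables outside individual blocks is large, of order $n^4/s^3$ when summed over blocks.

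Next I would bound the number of distinct subfunctions a nondeterministic circuit of size $s$ and width $w$ can realize when the variables outside a block $V_i$ are fixed. Such a subfunction is determined by (i) the set of at most $w$ boundary wires crossing a width-$w$ cut that separates the gates reading from $V_i$ from the rest of $C$, (ii) the specific values carried on those boundary wires, and (iii) the subcircuit attached to $V_i$, of some size $s_i$. Counting structurally distinct subcircuits of size $s_i$ contributes a factor of roughly $4^{s_i}$ (constant gate types, bounded fan-in), placing the boundary wires contributes a $\mathrm{poly}(s)$ factor, and the boundary configurations contribute $2^{O(w)}$. Since $\sum_i s_i \le s$, an averaging argument over the $n$ blocks brings out the $4^{s/n}$ factor.

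Combining the lower bound on subfunctions of $f$ with the per-block upper bound, taking logarithms and solving for $w$ should produce the stated inequality, with the additive $-\frac{\log_2 s}{2}$ correction arising from the $\mathrm{poly}(s)$ factor for the choice of boundary-wire placement. The main obstacle will be the per-block counting step in the nondeterministic setting: the nondeterministic guess wires can be shared across blocks and potentially inflate the subfunction count, so the argument must carefully account for them---likely by treating the guess bits attached to a block as extra inputs to that block's subcircuit and existentially quantifying over the remaining guesses. Getting the exact form $\frac{n^4}{4^{s/n} s^3}$ will then depend on balancing the block size against the subcircuit-count/interface-count trade-off, and on verifying that the $4^{s_i}$ bound on structurally distinct subcircuits remains valid once guess wires are incorporated.
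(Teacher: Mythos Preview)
Your Nechiporuk-style plan is not the paper's argument, and as stated it has a real gap.

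The paper's proof does not count subfunctions at all. Instead it (i) converts any nondeterministic width-$w$ size-$s$ circuit into a nondeterministic \emph{syntactic read-$k$-times branching program} of size $4^w s$, simply by taking one BP node per $\{0,1\}^w$ configuration of each layer; (ii) invokes the Borodin--Razborov--Smolensky lower bound $\exp(\Omega(n/(4^k k^3)))$ for such BPs; and (iii) arranges that $k=s/n$ by a padding trick: the hard function has extra selector variables $z_1,\dots,z_{2n}$ that pick out $n$ of the $2n$ data variables, and by averaging one can choose the selectors so that each surviving data variable is read at most $s/n$ times. The factors $4^{s/n}$ and $s^3$ in the statement are literally the $4^k$ and $k^3$ from the BRS bound with $k=s/n$; they do not come from any subcircuit-counting or block-balancing argument.

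The concrete problem with your plan is step (ii) of your outline. You posit ``a width-$w$ cut that separates the gates reading from $V_i$ from the rest of $C$'', but no such single cut need exist: the input nodes labeled by $V_i$-variables can appear at arbitrary, scattered layers of a width-$w$ circuit, so there is no contiguous $V_i$-region bounded by one width-$w$ interface. Relatedly, your ``$4^{s_i}$ structurally distinct subcircuits'' count is not the right quantity: when you fix the variables outside $V_i$, the circuit \emph{structure} does not change at all---only certain input nodes become constants---so the number of subfunctions is governed by the number of distinct \emph{behaviours} the fixed circuit can exhibit, not by the number of abstract size-$s_i$ circuits. Without a correct per-block subfunction bound, the averaging step that is supposed to produce $4^{s/n}$ has nothing to average, and there is no mechanism in your sketch that would yield the specific $n^4/(4^{s/n}s^3)$ shape.
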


\noindent By the theorem, we can obtain lower bounds for bounded width circuits
as follows, for example.

\begin{corollary} \label{coro:lb1}
There is an explicit Boolean function $f$ which cannot be computed by any
nondeterministic circuit of size $O(n)$ and width $o(n)$.
\end{corollary}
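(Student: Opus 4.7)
The plan is to derive the corollary as a direct specialization of Theorem~\ref{thrm:lb}, letting $f$ be exactly the explicit function guaranteed by that theorem. I would proceed by contradiction: assume some nondeterministic circuit of size $s$ and width $w$ computes $f$ with $s = O(n)$ and $w = o(n)$, and then check that the lower bound of the theorem forces $w = \Omega(n)$, which is incompatible with $w = o(n)$.

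Concretely, write $s \leq c n$ for a constant $c > 0$ (valid for all sufficiently large $n$). Then in the expression
\[
w = \Omega\!\left(\frac{n^4}{4^{s/n} s^3}\right) - \frac{\log_2 s}{2},
\]
the factor $4^{s/n}$ is at most $4^c$, a constant, and $s^3 \leq c^3 n^3$. Hence the first term is
\[
\Omega\!\left(\frac{n^4}{4^c \cdot c^3 \, n^3}\right) = \Omega(n),
\]
while the subtracted term $\tfrac{1}{2}\log_2 s = O(\log n)$ is negligible. Combining these, any such circuit must satisfy $w = \Omega(n)$, contradicting $w = o(n)$.

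There is no real obstacle here beyond bookkeeping: the corollary is essentially the observation that Theorem~\ref{thrm:lb} is strong enough to tolerate $s$ linear in $n$ without collapsing the bound, because $n^4 / s^3$ still grows linearly when $s = \Theta(n)$, and the exponential factor $4^{s/n}$ remains bounded. The only mild subtlety worth stating explicitly in the write-up is that the ``$O(n)$'' hypothesis on $s$ must be treated as a uniform upper bound $s \leq cn$, so that $4^{s/n}$ is genuinely bounded by a constant; otherwise the substitution into the denominator is immediate.
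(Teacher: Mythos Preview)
Your proposal is correct and matches the paper's intended argument: the paper presents Corollary~\ref{coro:lb1} as a direct consequence of Theorem~\ref{thrm:lb} without a separate proof, and your substitution $s \le cn$ into the bound to obtain $w = \Omega(n)$ is exactly the routine computation that justifies this.
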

\begin{corollary} \label{coro:lb2}
There is an explicit Boolean function $f$ which cannot be computed by any
nondeterministic circuit of size $O(n \log\log n)$ and
width $\frac{n}{\log^{\omega(1)} n}$.
\end{corollary}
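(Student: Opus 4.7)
The plan is to derive Corollary~\ref{coro:lb2} directly from Theorem~\ref{thrm:lb} by substituting $s = O(n \log\log n)$ into the stated lower bound and checking that the resulting width exceeds any bound of the form $n/\log^{\omega(1)} n$. All the technical content sits inside Theorem~\ref{thrm:lb}; the corollary reduces to an asymptotic calculation.

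Concretely, write $s \le c\,n\log\log n$ for some constant $c$. Then
$$4^{s/n} \le 4^{c\log\log n} = (\log n)^{2c}, \qquad s^{3} = O(n^{3}(\log\log n)^{3}),$$
so the main term of the width lower bound becomes
$$\frac{n^{4}}{4^{s/n}\,s^{3}} \;=\; \Omega\!\left(\frac{n}{(\log n)^{2c}(\log\log n)^{3}}\right),$$
which is of the form $\Omega(n/(\log n)^{k})$ for some constant $k$ depending only on $c$. The subtracted correction $\frac{1}{2}\log_2 s = O(\log n)$ is dominated by this main term, so Theorem~\ref{thrm:lb} forces any nondeterministic circuit of size $s = O(n\log\log n)$ that computes $f$ to have width $w = \Omega(n/(\log n)^{k})$.

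By definition, a width of the form $n/\log^{\omega(1)} n$ is smaller than $n/(\log n)^{k'}$ for every constant $k'$, and in particular smaller than the lower bound just derived. Hence no such circuit can compute $f$, and the explicit function provided by Theorem~\ref{thrm:lb} witnesses the corollary. The only point requiring any care is bookkeeping: the constant $c$ hidden in the $O(n\log\log n)$ size bound enters the exponent of $\log n$ in the width lower bound, so a larger $c$ weakens the bound, but since the corollary only asks us to beat every $\log^{\omega(1)} n$ denominator, a fixed polylogarithmic denominator always suffices. I do not anticipate a substantive obstacle here — the genuine difficulty was packaged into Theorem~\ref{thrm:lb}, and this corollary is just the appropriate choice of parameters.
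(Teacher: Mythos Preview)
Your proposal is correct and matches the paper's approach: the paper states Corollary~\ref{coro:lb2} as an immediate consequence of Theorem~\ref{thrm:lb} without a separate proof, and your substitution $s = O(n\log\log n)$ together with the asymptotic bookkeeping is exactly the intended derivation. The only content here is the parameter calculation, which you carried out correctly.
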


The proof of our lower bounds is based on a known result of lower bounds for
branching programs, which is in Section~\ref{subsec:lbbp}.

\smallskip

\noindent {\bf Bounded width circuits.}
Although bounded width circuits are an intermediate model in the main aim
of this paper (i.e., the approach to circuit lower bounds),
lower bounds for bounded width circuits may be of independent interest.
The width of circuits corresponds to the space of Turing machines, and
nonuniform-{\sf SC} is the class of decision problems solvable by
Boolean circuits with polynomial size, polylogarithmic width, and fan-in 2.
({\sf SC} is the class of decision problems solvable by a Turing machine
that simultaneously uses polynomial time and polylogarithmic space.)
Nevertheless, lower bounds for bounded width circuits have not been
discussed previously as far as we know.

\smallskip

\noindent {\bf Related problems.}
Proving that there is an explicit Boolean function $f$ which cannot be
computed by any (deterministic) circuit of size $O(n)$ and depth $O(\log n)$
is one of the realistic goals in circuit complexity.
Corollary~\ref{coro:lb1} resolves the width variant of the open problem
in a stronger form.
Proving a superlinear size lower bound for general circuits is a central
problem in circuit complexity.
Corollary~\ref{coro:lb1} implies that we can prove a superlinear
lower bound if the width is slightly bounded.

\subsection{Conversions to Bounded Width Circuits} \label{subsec:intro_conv}

We have proved lower bounds for bounded width circuits in the previous
subsection.
Theorem~\ref{thrm:lb} show the goal of conversions in the first step
of the approach.
(The goal is the current goal and improved lower bounds are hoped.)
By Theorem~\ref{thrm:lb}, we obtain the following corollary.

\begin{corollary} \label{coro:conv1}
If any circuit in a circuit class $\mathcal{C}$ can be converted to
a nondeterministic circuit which does not satisfy
$$w = \Omega(\frac{n^4}{4^{\frac{s}{n}}s^3}) - \frac{\log_2 s}{2},$$
then there is an explicit Boolean function $f$ which cannot be computed
by any circuit in $\mathcal{C}$.
\end{corollary}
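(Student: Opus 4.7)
The plan is to prove the corollary by a direct contrapositive application of Theorem~\ref{thrm:lb}. Let $f$ be the explicit Boolean function given by that theorem. Suppose for contradiction that some circuit $C \in \mathcal{C}$ computes $f$. By the hypothesis on $\mathcal{C}$, we can convert $C$ to a nondeterministic circuit $C'$ that computes the same function $f$ and whose size $s$ and width $w$ fail to satisfy the stated bound $w = \Omega(\frac{n^4}{4^{\frac{s}{n}}s^3}) - \frac{\log_2 s}{2}$. But Theorem~\ref{thrm:lb} guarantees that every nondeterministic circuit computing $f$ must satisfy this bound, giving a contradiction. Hence no circuit in $\mathcal{C}$ computes $f$.

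The only point that deserves a sentence of care is the meaning of ``convert'': the hypothesis must be read as saying that the converted nondeterministic circuit computes the same Boolean function as the original circuit, so that a $\mathcal{C}$-circuit for $f$ yields a nondeterministic circuit for $f$. Once this is spelled out, the argument is purely a logical rearrangement: Theorem~\ref{thrm:lb} states ``if $C'$ computes $f$, then the bound holds,'' and the hypothesis of the corollary provides a conversion whose output violates that bound, so the antecedent of the theorem must fail, i.e., no such $C$ exists.

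There is no genuine obstacle here; the corollary is essentially a restatement of Theorem~\ref{thrm:lb} packaged so as to make the role of the first step of the approach (the conversion) explicit. The work in the paper lies entirely in establishing Theorem~\ref{thrm:lb} itself and, separately, in exhibiting concrete circuit classes $\mathcal{C}$ admitting conversions of the required quality; the corollary merely records that these two ingredients, once supplied, combine to give a lower bound against $\mathcal{C}$.
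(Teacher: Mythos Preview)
Your proposal is correct and matches the paper's approach: the paper does not give an explicit proof of this corollary but simply states that it follows from Theorem~\ref{thrm:lb}, which is exactly the contrapositive argument you wrote out. Your additional remark clarifying that ``convert'' must preserve the computed function is a reasonable point to make explicit, though the paper leaves it implicit.
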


\noindent The following corollary is more explicit.

\begin{corollary} \label{coro:conv2}
If any circuit in a circuit class $\mathcal{C}$ can be converted to
a nondeterministic circuit which satisfy
size $O(n \log\log n)$ and width $\frac{n}{\log^{\omega(1)} n}$,
then there is an explicit Boolean function $f$ which cannot be computed
by any circuit in $\mathcal{C}$.
\end{corollary}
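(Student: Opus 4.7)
The plan is to derive this corollary as a direct contrapositive of Corollary~\ref{coro:lb2}. I would take the explicit Boolean function $f$ whose existence is guaranteed by Corollary~\ref{coro:lb2}, namely a function that cannot be computed by any nondeterministic circuit of size $O(n \log\log n)$ and width $n/\log^{\omega(1)} n$, and argue that this same $f$ witnesses the conclusion of Corollary~\ref{coro:conv2}.

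The key step is the contradiction argument. Suppose, for contradiction, that some circuit $C$ in $\mathcal{C}$ computes $f$. By the hypothesis of the corollary, $C$ can then be converted to a nondeterministic circuit $C'$ of size $O(n \log\log n)$ and width $n/\log^{\omega(1)} n$. Since any reasonable notion of ``conversion'' must preserve the computed Boolean function (otherwise the premise is vacuous), $C'$ also computes $f$. But this directly contradicts the choice of $f$ via Corollary~\ref{coro:lb2}. Hence no such $C \in \mathcal{C}$ can exist, and $f$ is the explicit function required.

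Since the argument is a purely formal contrapositive on top of Corollary~\ref{coro:lb2}, there is essentially no technical obstacle. The one subtle point worth flagging is that the statement implicitly assumes the conversion is semantics-preserving; I would mention this explicitly for clarity. I would also note that the proof of Corollary~\ref{coro:conv1} follows by the identical template, simply substituting the general lower bound of Theorem~\ref{thrm:lb} for its specialization in Corollary~\ref{coro:lb2}.
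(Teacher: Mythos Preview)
Your proposal is correct and matches the paper's intended reasoning: the paper states Corollary~\ref{coro:conv2} without proof as an immediate (``more explicit'') consequence of Theorem~\ref{thrm:lb}/Corollary~\ref{coro:conv1}, and your contrapositive argument via Corollary~\ref{coro:lb2} is exactly how one unpacks that. Your remark that ``conversion'' must be semantics-preserving is a fair clarification of an implicit assumption in the paper.
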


In Section~\ref{sec:conv}, we give some observations, which include
a relation between conversions to bounded width circuits and
a standard pebble game.

\subsection{Paper Organization}

Section~\ref{sec:lb} and Section~\ref{sec:conv} correspond to
Section~\ref{subsec:intro_lb} and Section~\ref{subsec:intro_conv},
respectively.
Our proof outline for lower bounds also provides a satisfiability algorithm
for nondeterministic bounded width circuits.
We mention the algorithm in Section~\ref{sec:conc}.

\section{Preliminaries}

\subsection{Circuits} \label{subsec:circ}

{\em Circuits} are formally defined as directed acyclic graphs.
The nodes of in-degree 0 are called {\em inputs}, and each one of them
is labeled by a variable or by a constant 0 or 1.
The other nodes are called {\em gates}, and each one of them
is labeled by a Boolean function.
The {\em fan-in} of a node is the in-degree of the node, and
the {\em fan-out} of a node is the out-degree of the node.
In this paper, the gates are AND gates of fan-in two, OR gates of fan-in two,
and NOT gates.
There is a single specific node called {\em output}.
The {\em size} of a circuit is the number of gates in the circuit.

When we consider the {\em width} of a circuit, we temporarily
insert COPY gates to the circuit.
A COPY gate is a dummy gate which simply outputs its input.
A circuit is {\em layered} if its set of gates can be partitioned
into subsets called {\em layers} such that every edge in the circuit
is between adjacent layers.
Note that every circuit is naturally converted to a layered circuit
by inserting COPY gates to each edge which jumps over some layers.
The width of a layer is the number of gates in the layer.
The width of a circuit is the maximum width of all layers in
the circuit.

A {\em nondeterministic circuit} is a circuit with {\em actual inputs}
$(x_1, \ldots, x_n) \in \{0,1\}^n$ and some further inputs
$(y_1, \ldots, y_m) \in \{0,1\}^m$ called {\em guess inputs}.
A nondeterministic circuit computes a Boolean function $f$ as follows:
For $x \in \{0,1\}^n$, $f(x)=1$ iff there exists a setting of the guess inputs
$\{y_1, \ldots, y_m\}$ which makes the circuit output 1.
We call a circuit without guess inputs
a {\em deterministic circuit} to distinguish it
from a nondeterministic circuit.

\subsection{The Definition of Nondeterministic Bounded Width Circuits} \label{subsec:nbwc}

To the best of our knowledge, there has been no paper which consider
nondeterministic bounded width circuits, and we need to notice that
the appearance of guess inputs is a sensitive problem to the
computational power of the circuits.
We restrict the number of nodes labeled by a guess input to at most one,
and we do not restrict the number of nodes labeled by an actual input.
(The restriction makes no sense if the width is unbounded.)
Also in nondeterministic Turing machines, each of nondeterministic choices
cannot be repeated and Turing machines cannot store all nondeterministic
choices if the space of Turing machines is bounded.

\subsection{Branching Programs}

A {\em nondeterministic branching program} is a directed acyclic graph.
The nodes of non-zero out-degree are called {\em inner nodes} and labeled
by a variable.
The nodes of out-degree 0 are called {\em sinks} and labeled by 0 or 1.
For each inner node, outgoing edges are labeled by 0 or 1.
There is a single specific node called the {\em start node}.
The output of the nondeterministic branching program is 1 if and only if
at least one path leads to 1 sink.
The {\em size} of branching programs is the number of its nodes.
A branching program is {\em syntactic read-$k$-times} if each variable
appears at most $k$ times in each path.

\section{Lower Bounds for Bounded Width Circuits} \label{sec:lb}

\subsection{Lower Bounds for Branching Programs} \label{subsec:lbbp}

To prove Theorem~\ref{thrm:lb}, we use the following theorem.

\begin{theorem}[\cite{BRS93}] \label{thrm:brs}
There is an explicit Boolean function $f$ such that every nondeterministic
syntactic read-$k$-times branching program for computing $f$ has
size $\exp(\Omega(\frac{n}{4^kk^3}))$.
\end{theorem}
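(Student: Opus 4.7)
Plan: I would prove Theorem~\ref{thrm:brs} by a rectangle-covering argument, which is the standard approach for lower bounds on read-$k$-times branching programs. The strategy has two independent halves: (i) show that any nondeterministic syntactic read-$k$-times BP of size $S$ induces a covering of its accepting set by a controlled number of combinatorial ``generalized rectangles'' on a partition of the variables, and (ii) exhibit an explicit function $f$ whose $f^{-1}(1)$ requires exponentially many such rectangles to cover. Putting the two together turns a BP size lower bound into a purely combinatorial/algebraic question.

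Step 1 --- From BP paths to rectangles. Fix a nondeterministic syntactic read-$k$-times BP of size $S$. For each accepted input, select one accepting path. By averaging, some node $v$ lies near the middle of a constant fraction of these paths, so it suffices to analyse the accepted inputs whose chosen path passes through $v$. On any such path, each variable is queried at most $k$ times, so one can classify the queries of each variable as ``before $v$'' and ``after $v$''; this gives at most $2^k$ possible patterns per variable and therefore partitions the accepted inputs at $v$ into $2^{O(k)}$ classes, on each of which the inputs decompose as a combinatorial rectangle $A \times B$ on a balanced partition of the $n$ variables (plus a bounded amount of overlap coming from variables read on both sides, which is absorbed by a further partition argument at a cost of a $\mathrm{poly}(k)$ factor). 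A careful optimization of these two sources of slack is what produces the exponent $n/(4^k k^3)$.

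Step 2 --- The hard function. Choose $f$ to be an explicit function of high communication-style ``rectangle complexity''. A concrete candidate, in the spirit of BRS93, is a suitable quadratic form over $\mathbb{F}_2$ (e.g.\ a bilinear form $\sum_{i,j} a_{ij} x_i y_j$ with $A$ a matrix of full rigidity, or the characteristic function of the non-zero codewords of a good linear code). Using an algebraic discrepancy/rank argument, every combinatorial rectangle $A \times B$ that is $1$-monochromatic for $f$ has size at most $2^{n-\Omega(n)}$, so covering $f^{-1}(1)$ requires at least $\exp(\Omega(n))$ rectangles on any balanced partition.

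Step 3 --- Combining. The covering from Step~1 has size at most $S \cdot 2^{O(k)} \cdot k^{O(1)}$, and Step~2 says any such covering must have size at least $\exp(\Omega(n)/\text{(loss from partition enumeration)})$; bookkeeping the loss yields $S \ge \exp\bigl(\Omega(n/(4^k k^3))\bigr)$.

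The main obstacle is Step~1, specifically obtaining the tight dependence on $k$. Decomposing the accepting set at a single cut node is easy if one is willing to pay $k^{O(k)}$; obtaining the claimed $4^k k^3$ requires a more delicate combinatorial accounting of how the at-most-$k$ occurrences of each variable distribute around the cut, together with a balanced-cut lemma that simultaneously controls the partition for all paths through $v$. The algebraic half (Step~2) is comparatively routine once the right hard function is fixed, since rectangle lower bounds for quadratic forms over $\mathbb{F}_2$ are standard.
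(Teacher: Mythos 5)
This statement is imported by the paper as a black box: Theorem~\ref{thrm:brs} is cited from~\cite{BRS93} and no proof is given in the text, so there is no internal proof to compare against. Measured against the actual argument in~\cite{BRS93}, your high-level architecture (decompose the accepted set of a small syntactic read-$k$-times BP into few generalized rectangles, then exhibit an explicit function whose $1$-set admits no small rectangle cover) is the right one, and your Step~3 bookkeeping is the right shape.

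The genuine gap is in Step~1, and it is not just a matter of ``delicate accounting.'' Cutting every accepting path at a single middle node $v$ and classifying each variable's reads as before/after $v$ does not produce rectangles on (nearly) disjoint variable sets once $k \geq 2$: generically \emph{every} variable can be read on both sides of the cut, so the ``bounded amount of overlap absorbed by a further partition argument'' is in fact total overlap, and there is no rectangle structure left to exploit. Moreover, for a nondeterministic BP the partition induced at $v$ is path-dependent, so different accepted inputs landing at $v$ need not share a common bipartition of the variables. The actual proof in~\cite{BRS93} splits each accepting path into $\Theta(k)$ segments and applies a counting/pigeonhole argument to find, for each input, a \emph{subset} of those segments such that a constant fraction of the variables is read only inside that subset and another constant fraction only outside; the $4^k$ in the exponent is exactly the $2^{2k}$ cost of enumerating subsets of the $2k$ segments, not slack at a single cut. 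This also forces the hard function in Step~2 to be hard in the \emph{best-partition} sense, i.e., to have only small $1$-monochromatic rectangles with respect to every balanced bipartition drawn from the enumerated family; a bilinear form $\sum a_{ij}x_i y_j$ with a fixed split of the variables does not have this property. The function used in~\cite{BRS93} is a quadratic form over $\mathrm{GF}(3)$ (the characteristic function of a good linear code also works), chosen precisely so that the rank/discrepancy bound survives an adversarial balanced partition. Without the multi-segment decomposition and the best-partition requirement made explicit, the proposal as written would not go through.
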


\noindent In this paper, we denote by $f_{BRS}$ the Boolean function of
the theorem.

\subsection{Proof of Theorem~\ref{thrm:lb}} \label{subsec:lbpr}

In this subsection, we prove Theorem~\ref{thrm:lb}.
Firstly, we define the concept of {\em read-$k$-times} for circuits.
A variable is read-$k$-times in a circuit if the number of nodes labeled
by the variable is at most $k$.
A circuit is read-$k$-times if every actual input is read-$k$-times
in the circuit.
(Note that ``read-$k$-times'' makes a sense since the width of circuits
is bounded.)

\begin{lemma} \label{lem:conv_rkt}
Any nondeterministic read-$k$-times circuit of size $s$ and width $w$
can be converted to a nondeterministic syntactic read-$k$-times
branching program of size $4^ws$.
\end{lemma}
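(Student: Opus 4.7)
The plan is to have the branching program simulate the circuit gate by gate, with its state recording the currently \emph{live} gate outputs---values that have already been computed but are still needed downstream. Since the circuit has width $w$, we may first put it in layered form by inserting COPY gates so that every wire crosses exactly one layer boundary and each layer holds at most $w$ gates. Fix a topological order of the gates that processes earlier layers before later ones and, within a layer, processes the gates in some fixed order.

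The key geometric observation is that at any moment during this processing, the live outputs fall into at most two groups: (a) outputs of gates in the previous layer whose consumers in the current layer have not all been processed, and (b) outputs of the already-processed gates in the current layer, which will be consumed in the next layer. Each group has at most $w$ members, so the live configuration is a string of length at most $2w$, yielding at most $4^w$ possibilities.

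For each non-COPY gate $g_i$ of the circuit and each live configuration $\sigma$, introduce a branching program node $v_{g_i,\sigma}$. The outgoing transition depends on the type of $g_i$: if $g_i$ is an actual input labeled by $x_j$, label $v_{g_i,\sigma}$ by $x_j$ and add $0$- and $1$-edges to the two successor nodes obtained by appending the queried bit to $\sigma$ and dropping values that have become stale; if $g_i$ is the (unique) guess input for some $y_j$, use a nondeterministic two-way branch; if $g_i$ is AND/OR/NOT, read the relevant input bits from $\sigma$, compute the new output, and transition deterministically. COPY gates lying between two consecutive non-COPY gates merely rename live wires without altering their values, so they can be folded into these transitions without introducing new BP nodes. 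A distinguished start node with the empty configuration begins execution, and the two sinks are reached after the output gate has been processed, according to the resulting value.

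This produces at most $s \cdot 4^w$ branching program nodes. The syntactic read-$k$-times property is immediate: on any source-to-sink path, the variable $x_j$ is queried exactly once per input node labeled by $x_j$, and by hypothesis there are at most $k$ such nodes. Correctness follows because accepting paths of the BP are in bijection with settings of the actual and guess inputs that make the circuit output $1$. The main obstacle I anticipate is justifying the $2w$ bound on the live outputs carefully---in particular verifying that it survives the chosen intra-layer ordering after COPY insertion---and confirming that chains of COPY gates may indeed be compressed into single BP edges without inflating the node count beyond $s \cdot 4^w$.
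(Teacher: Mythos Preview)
Your approach is essentially the same as the paper's: both simulate the layered circuit in a branching program whose state records the (at most $O(w)$) live wire values, and your explicit $2w$ bound on the live configuration during intra-layer processing is exactly what justifies the $4^w$ factor that the paper's three-sentence sketch leaves to ``natural conversion.'' The only minor wrinkle is bookkeeping---your node count indexes over all non-COPY circuit nodes (including input nodes), which is not literally $s$, and AND/OR/NOT ``nodes'' should really be absorbed into edges since BP inner nodes must query a variable---but these are cosmetic adjustments that do not affect the argument or the $4^w s$ order of magnitude.
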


\begin{proof}
The number of values (0 or 1) from a layer to the next layer is
at most $w$.
For each at most $2^w$ combination of 0 and 1, we prepare one node
in the constructed branching program.
Natural conversion from the circuit to the branching program is
enough to prove the lemma.
\end{proof}

If the size of a circuit is $s$, then the number of nodes labeled by
actual inputs is at most $s+1$.
Thus, the average number of nodes labeled by each actual input is
at most $\frac{s+1}{n}$.
However, the circuit is not necessarily read-$\frac{s+1}{n}$-times.
This is the most difficult point of this proof.
We resolve the difficulty by the definition of a Boolean function $f$.

We define $f(x_1, x_2, \ldots, x_{2n}, z_1, z_2, \ldots, z_{2n})$ as follows. 
If $\sum_{i=1}^{2n} z_i \neq n$, then $f = 0$.
Otherwise, $f = f_{BRS}$ and the $n$ input variables are $x_i$'s such that
$z_i = 1$.

\begin{proof}[Proof of Theorem~\ref{thrm:lb}]
Let $C$ be a nondeterministic circuit computing
$f(x_1, x_2, \ldots, x_{2n}, z_1, z_2, \ldots, z_{2n})$, and let $s$ and $w$
be the size and the width of $C$, respectively.
We choose $n$ variables from $x_1, \ldots, x_{2n}$ so as every chosen
variable is read-$\frac{s}{n}$-times in $C$.
We assign $1$ to $z_i$ if $x_i$ has been chosen for $1 \leq i \leq 2n$,
and assign $0$ otherwise.
We assign an arbitrary value to $x_i$ which has not been chosen
for $1 \leq i \leq 2n$.
Let $C'$ be the obtained read-$\frac{s}{n}$-times circuit, and let $s'$ and $w'$
be the size and the width of $C'$, respectively.
$C'$ computes $f_{BRS}$, and $s' \leq s$ and $w' \leq w$.
By Lemma~\ref{lem:conv_rkt} and Theorem~\ref{thrm:brs},
\begin{eqnarray*}
  4^{w'}s'       & = & \exp(\Omega(\frac{n}{4^{\frac{s}{n}}(\frac{s}{n})^3})) \\
  4^ws          & = & \exp(\Omega(\frac{n}{4^{\frac{s}{n}}(\frac{s}{n})^3})) \\
  2w + \log_2 s & = & \Omega(\frac{n^4}{4^{\frac{s}{n}}s^3}) \\
  w             & = & \Omega(\frac{n^4}{4^{\frac{s}{n}}s^3}) - \frac{\log_2 s}{2}.
\end{eqnarray*}
\end{proof}

\section{Conversions to Bounded Width Circuits} \label{sec:conv}

We can choose various circuits as a circuit class $\mathcal{C}$
in Corollary~\ref{coro:conv1} and Corollary~\ref{coro:conv2}.
In this section, we firstly show an example of conversions from
restricted circuits to bounded width circuits, and prove
a relation between conversions to bounded width circuits and
a standard pebble game.

\subsection{Conversions from $O(n)$-size $O(\log n)$-depth circuits}

Proving that there is an explicit Boolean function $f$ which cannot be
computed by any deterministic circuit of size $O(n)$ and depth $O(\log n)$
is one of the realistic goals in circuit complexity.
The following theorem is an example of conversions from restricted circuits
to bounded width circuits.

\begin{theorem} \label{thrm:conv}
Any deterministic circuit of size $O(n)$ and depth $O(\log n)$ can be
converted to a deterministic circuit of size $O(n^{1 + \epsilon})$ and
width $O(n / \log\log n)$.
\end{theorem}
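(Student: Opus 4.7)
The plan is to reduce the theorem to a black-pebbling question on the gate-DAG of the circuit. There is a standard correspondence, developed further in Section~\ref{sec:conv}, between deterministic bounded-width circuits and black pebblings of their gate-DAG: a pebbling using $p$ pebbles in $t$ moves yields a layered deterministic circuit of width $O(p)$ and size $O(t)$, with each pebbling configuration becoming one layer and each currently placed pebble becoming one wire. It therefore suffices to exhibit, for any DAG of size $O(n)$, fan-in $2$, and depth $O(\log n)$, a pebbling with $O(n/\log\log n)$ pebbles in $O(n^{1+\epsilon})$ moves.

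I would construct such a pebbling slab by slab. First, partition the DAG by depth into $k = O(\log n / \log\log n)$ slabs, each of depth $\ell = \log\log n$. Since each slab has depth $\ell$, every slab-output depends on at most $2^{\ell} = \log n$ slab-inputs and can be computed using $O(\log n)$ local pebbles once those inputs are available. Then process the slabs in topological order, using a checkpointing strategy: retain pebbles only on a sparse subset of earlier slab boundaries, and recompute the remaining inter-slab values on demand from the nearest checkpoint. The recomputation cost per fetched value is controlled by $2^{\ell} = \log n$ and by the across-slab recursion depth, and with an appropriate checkpoint spacing the total move count blows up by only a multiplicative $n^{\epsilon}$ factor over the original $O(n)$ gate count.

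The main obstacle is balancing the checkpoint schedule so that the pebble count stays $O(n/\log\log n)$ and the move count stays $O(n^{1+\epsilon})$ simultaneously. Two quantitative inputs make this feasible: (i) the total number of inter-slab wires, summed over all slabs, is $O(n)$, because the circuit has only $O(n)$ fan-in-$2$ gates, so an averaging/re-grouping argument bounds the active interface of any single slab by $O(n/\log\log n)$; and (ii) each recomputation touches only $O(\log n)$ wires within a slab, so the $\epsilon$-tuned checkpoint spacing pays its cost in size but not in width. The technical crux is this balancing calculation, together with handling the occasional heavy slab whose boundary alone would exceed the width budget; such a slab can be split into sub-slabs of smaller boundary, at an expense that the $O(n^{1+\epsilon})$ size budget can absorb. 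I do not expect a fundamentally new pebbling idea to be needed beyond specializing the standard recomputation-via-checkpoints approach to slabs of depth $\log\log n$.
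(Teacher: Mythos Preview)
The paper's proof is quite different and much shorter: it directly invokes Valiant's depth-reduction lemma (\cite{V76}, also Section~14.4.3 of \cite{AB09}). That lemma says that in any fan-in-$2$ circuit of size $O(n)$ and depth $O(\log n)$ one can find $O(n/\log\log n)$ edges whose removal drops the depth to at most $\epsilon\log n$. One then computes the values carried by those cut edges one at a time in topological order; each sits atop a residual cone of depth $\epsilon\log n$, hence of size at most $n^{\epsilon}$, so the total size is $O(n/\log\log n)\cdot n^{\epsilon}=O(n^{1+\epsilon})$, while remembering the $O(n/\log\log n)$ cut values gives the width bound. Your reduction to black pebbling is fine (and is exactly Theorem~\ref{thrm:peb1}), but the pebbling strategy you propose is not the one that works.

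Your slab-and-checkpoint plan has a real gap at the point you yourself flag as ``the technical crux.'' Partitioning by depth does not control the size of individual slab boundaries: a size-$O(n)$, depth-$O(\log n)$ circuit can concentrate $\Theta(n)$ gates at a single depth level (e.g.\ many parallel small trees feeding a later combiner), so some horizontal cut carries $\Theta(n)$ wires. Your item~(i) only bounds the \emph{sum} of boundary sizes by $O(n)$, hence the average, not the maximum; and ``splitting a heavy slab into sub-slabs'' cannot help, because the wire count across a fixed depth level is unchanged no matter where you draw the slab lines. Without a way around wide boundaries, your checkpointing either stores $\Theta(n)$ values at some checkpoint (blowing the width budget) or forgoes checkpoints there and recomputes through $k=\Theta(\log n/\log\log n)$ slabs, paying $(\log n)^{k}=n^{\Theta(1)}$ rather than $n^{1+\epsilon}$ where the hidden constant is the original depth constant, not $\epsilon$. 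This is precisely the obstruction that Valiant's lemma overcomes by cutting a carefully chosen \emph{set of edges} (selected via the binary expansion of depths) rather than any horizontal layer; that idea is the missing ingredient in your proposal, and once you supply it you are essentially back to the paper's argument.
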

\begin{proof}
Let $C$ be a circuit of size $O(n)$ and depth $O(\log n)$.
It is known that we can find $O(n / \log\log n)$ edges in $C$ whose
removal yields a circuit of depth at most $\epsilon \log n$
(\cite{V76}, Section~14.4.3 of \cite{AB09}).
We construct a circuit which computes the value of $O(n / \log\log n)$ edges
one by one.
\end{proof}

Although we can obtain a bounded width circuit by Theorem~\ref{thrm:conv},
the bounded width circuit is large for our current goal which is
mentioned in Section~\ref{subsec:intro_conv}.

\subsection{A Pebble Game and Conversions}

The pebble game considered in this paper is the following one.
(For the definition, we referred to the paper~\cite{N13}.)

\begin{definition}[Pebble game]
Let $G$ be a directed acyclic graph with a unique sink vertex $z$.
The {\em black-white pebble game} on $G$ is the following one-player game.
At any time $t$, we have a configuration $\mathbb{P}_t = (B_t, W_t)$ of
black pebbles $B_t$ and white pebbles $W_t$ on the vertices of $G$,
at most one pebble per vertex.
The rules of the game are as follows:
\begin{enumerate}
\item If all immediate predecessors of an empty vertex $v$ have pebbles
on them, a black pebble may be placed on $v$. In particular, a black pebble
can always be placed on a source vertex.
\item A black pebble may be removed from any vertex at any time.
\item A white pebble may be placed on any empty vertex at any time.
\item If all immediate predecessors of a white-pebbled vertex $v$ have pebbles
on them, the white pebble on $v$ may be removed. In particular, a white pebble
can always be removed from a source vertex.
\end{enumerate}
A {\em black-white pebbling} of $G$, is a sequence of pebble configurations
$\mathcal{P} = \{\mathbb{P}_0, \ldots, \mathbb{P}_{\tau}\}$ such that
$\mathbb{P}_0 = (\emptyset, \emptyset)$,
$\mathbb{P}_{\tau} = (\{z\}, \emptyset)$, and for all $t \in [\tau]$,
$\mathbb{P}_t$ follows from $\mathbb{P}_{t-1}$ by one of the rules above.
The {\em time} of a pebbling
$\mathcal{P} = \{\mathbb{P}_0, \ldots, \mathbb{P}_{\tau}\}$ is simply
$\mathit{time}(\mathcal{P}) = \tau$ and the {\em space} is
$\mathit{space}(\mathcal{P}) = \max_{0 \leq t \leq \tau} \{|B_t \cup W_t|\}$.
A {\em black pebbling} is a pebbling using black pebbles only, i.e.,
having $W_t = \emptyset$ for all $t$.
\end{definition}

Since circuits have been defined as directed acyclic graphs, we can consider
the pebble game for circuits.
(The inputs and the output in circuits are source vertices and a unique sink
in graphs.)
The following two theorems show a relation between the pebble game and
conversions to bounded width circuits.

\begin{theorem} \label{thrm:peb1}
If there is a black pebbling $\mathcal{P}$ for
a deterministic circuit $C$, then the circuit $C$ can be converted to
a deterministic circuit of size $\mathit{time}(\mathcal{P})$ and
width $\mathit{space}(\mathcal{P})$.
\end{theorem}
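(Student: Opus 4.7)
The plan is to build a layered deterministic circuit $C'$ with one layer per pebble configuration $\mathbb{P}_t$, maintaining the invariant that for every $v \in B_t$ there is a designated gate $g_{v,t}$ in layer $t$ of $C'$ whose value on any input equals the value computed by the original vertex $v$ of $C$ on that input. Since $B_0 = \emptyset$ the initial layer is empty, and since $B_\tau = \{z\}$ the gate $g_{z,\tau}$ can serve as the output of $C'$.

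To go from layer $t-1$ to layer $t$, I would inspect the move performed at step $t$ and act accordingly. If a black pebble is placed on a non-source vertex $v$, then by the pebbling rule every predecessor of $v$ lies in $B_{t-1}$, so each already has a gate in layer $t-1$; I would introduce a fresh gate $g_{v,t}$ in layer $t$ carrying the same Boolean label as $v$ and wire its inputs from the corresponding predecessor gates. If the placement is on a source vertex, I would instead add a new input node in layer $t$ with the same variable or constant label. If the move is a removal, no new gate is added for that vertex. For every persisting vertex $u \in B_{t-1} \cap B_t$ I would carry $g_{u,t-1}$ forward to $g_{u,t}$ via a COPY gate in layer $t$.

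Correctness of the invariant then follows by an easy induction on $t$, using the fact that each newly introduced gate only reads from predecessors already pebbled in $B_{t-1}$. For the size bound, every non-COPY gate of $C'$ is created by exactly one placement move, so the number of real gates is at most the total number of moves, which is $\mathit{time}(\mathcal{P}) = \tau$; COPY gates are not counted in the size by the definition in Section~\ref{subsec:circ}. For the width bound, layer $t$ contains exactly $|B_t|$ gates (including COPYs), so the overall width is $\max_t |B_t| = \mathit{space}(\mathcal{P})$. I do not expect a real obstacle here; the only point needing mild care is that a source vertex may be pebbled and unpebbled several times over the course of $\mathcal{P}$, which simply produces several input nodes of $C'$ sharing the same variable label, a situation the circuit model already permits.
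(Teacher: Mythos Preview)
Your proposal is correct and is essentially the same construction as the paper's proof, only spelled out in more detail: the paper also adds a gate or input node for each black placement, does nothing on removals, and bounds the width of layer $t$ by $|B_t|$ because only values whose pebbles survive to time $t$ need to be carried forward. Your explicit handling of COPY gates, the invariant on $g_{v,t}$, and the remark about repeated source pebbling are refinements the paper leaves implicit, but the underlying argument is identical.
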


\begin{proof}
For the placement and removal of black pebbles,
the circuit is constructed as follows:
\begin{enumerate}
\item If a black pebble is placed, then an input node or gate node
is added and connected from other nodes appropriately.
\item If a black pebble is removed, then we do nothing.
\end{enumerate}
The obtained circuit and $C$ obviously compute the same Boolean function.
The size of the circuit is at most the number of the placement
of black pebbles, which is at most $\mathit{time}(\mathcal{P})$.
For a time $t$, the added inputs and gates until $t$ are used from $t+1$
only if the placed black pebble is not removed until $t$.
Thus, the width of the layer of the circuit for the time $t$ is at most
the number of current black pebbles (i.e., $|B_t|$), and the width of
the circuit is at most $\mathit{space}(\mathcal{P})$.
\end{proof}

\begin{theorem}
If there is a black-white pebbling $\mathcal{P}$ for
a deterministic circuit $C$, then the circuit $C$ can be converted to
a nondeterministic circuit of size $c \cdot \mathit{time}(\mathcal{P})$ and
width $\mathit{space}(\mathcal{P}) + 1$ for a constant $c$.
\end{theorem}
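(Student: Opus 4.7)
The plan is to adapt the construction from Theorem~\ref{thrm:peb1}, adding a mechanism to handle white pebbles via guess inputs and a single ``verification'' wire that threads through every layer. I build the circuit layer by layer, following the sequence $\mathbb{P}_0, \ldots, \mathbb{P}_\tau$: the wires leaving layer $t$ carry one value per pebble currently in $B_t \cup W_t$, plus one extra wire carrying the running conjunction of all verification bits produced so far. This accounts for the width bound $\mathit{space}(\mathcal{P}) + 1$.

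For each pebbling move I specify the $O(1)$ gates inserted between layer $t-1$ and layer $t$. When a black pebble is placed on $v$, I insert the gate node for $v$ and wire it from the (already present) predecessor wires, exactly as in the proof of Theorem~\ref{thrm:peb1}. When a black pebble is removed, the corresponding wire is simply not carried to the next layer. When a white pebble is placed on $v$, I allocate a fresh guess input $y_j$ and let the wire for $v$ take the value of $y_j$; since each $y_j$ is used exactly once, this respects the restriction from Section~\ref{subsec:nbwc}. When a white pebble is removed from $v$, all its predecessors are currently pebbled, so I can recompute $v$'s correct value from the predecessor wires using an $O(1)$-size gadget, compare it with $y_j$ by an XNOR, and AND the result into the running verification wire. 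The final output gate ANDs the wire for $z$ (present by the endpoint condition $\mathbb{P}_\tau = (\{z\}, \emptyset)$) with the final verification wire.

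Correctness follows because the nondeterministic circuit outputs $1$ iff some assignment to the $y_j$'s makes both (i) the running verification bit equal to $1$ and (ii) the wire at $z$ equal to $1$. Condition (i) forces every guessed $y_j$ to equal the true value of the corresponding vertex under the actual inputs, in which case the computed value on the wire for $z$ equals $C(x)$; hence the circuit outputs $C(x)$. The size bound is immediate since each of the $\tau = \mathit{time}(\mathcal{P})$ moves contributes at most a constant number of gates, giving a bound $c \cdot \mathit{time}(\mathcal{P})$.

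The main thing to be careful about is the width accounting, and this is where I expect the only real subtlety. I need to check that at no intermediate moment during the $O(1)$-gate gadget for a white-pebble removal does the layer width exceed $\mathit{space}(\mathcal{P}) + 1$; the gadget for verifying $v$ reads from the predecessor wires (which are already in the layer) and from the $y_j$ wire (also already present), produces a single bit, and then frees the $y_j$ wire, so with a careful choice of intermediate layers the width never exceeds $|B_t \cup W_t| + 1 \le \mathit{space}(\mathcal{P}) + 1$. An analogous check is needed for black-pebble placements, and is handled exactly as in Theorem~\ref{thrm:peb1}.
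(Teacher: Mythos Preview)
Your proposal is correct and follows essentially the same approach as the paper's own proof: guess inputs for white-pebble placements, a constant-size verification gadget at each white-pebble removal, and one extra wire carrying the running correctness bit that forces the output to $0$ on any wrong guess. Your write-up is in fact more explicit than the paper's (you spell out the XNOR-then-AND mechanism and the final AND with the $z$ wire, whereas the paper just says the output is ``forcibly $0$''), but the construction and the accounting for size and width are the same.
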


\begin{proof}
For each of the placement and removal of black pebbles and white pebbles,
the circuit is constructed as follows:
\begin{enumerate}
\item If a black pebble is placed, then an actual input node or gate node
is added and connected from other nodes appropriately.
\item If a black pebble is removed, then we do nothing.
\item If a white pebble is placed, then a guess input node (labeled by
a new guess variable) is added.
\item If a white pebble is removed, then some gate nodes (and exceptionally
input nodes) are added to check the correctness of the guess input.
(The details are below.)
\end{enumerate}
The obtained circuit satisfies the conditions of the theorem, which is
proved by the similar way of Theorem~\ref{thrm:peb1}.

In the obtained circuit, each guess input guesses the output of
the corresponding gate (or exceptionally input).
In the case 4, the correctness of guess inputs is checked.
If the guess is not correct, then the output of the circuit is forcibly 0,
which guarantees the obtained circuit to compute the correct Boolean function
by the definition of nondeterministic circuits.
In the case 4, the number of added gates is at most constant, and
the width of the circuit increases by 1 to make the output 0 forcibly.
\end{proof}

\section{Concluding Remarks} \label{sec:conc}

In this paper, we investigated an approach to circuit lower bounds
via bounded width circuits.
The current state of the approach does not arrive at new circuit
lower bounds.
Although we have not found a sufficient pebbling even for restricted graphs,
known results of the pebble game may be helpful.

The proof outline for lower bounds in Section~\ref{subsec:lbpr} also
provides a satisfiability algorithm for nondeterministic bounded width circuits.
Finally, we mention the algorithm.

\subsection{Satisfiability Algorithms for Bounded Width Circuits}

The circuit satisfiability problem is defined as follows.

\medskip

\noindent {\bf Input:} A Boolean circuit $C$

\smallskip

\noindent {\bf Output:} Does $C$ output $1$ on some input?

\medskip

\noindent The circuit satisfiability problem has been intensively studied since
Ryan Williams showed a connection between the problem and lower bounds for
circuit complexity~\cite{Wil14}.

A satisfiability algorithm for nondeterministic syntactic
read-$k$-times branching programs has been provided as follows.

\begin{theorem}[\cite{NST17}] \label{thrm:nst}
There exists a deterministic and polynomial space algorithm for
a nondeterministic and syntactic read-$k$-times BP SAT with $n$ variables
and $m$ edges that runs in time
$O(\mathrm{poly}(n, m^{k^2}) \cdot 2^{(1 - 4^{-k-1})n})$.
\end{theorem}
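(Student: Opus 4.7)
The plan is to achieve the $2^{(1-4^{-k-1})n}$ bound by combining exhaustive enumeration over a large subset of the variables with a polynomial-time subroutine that exploits the syntactic read-$k$-times structure on the remaining variables. Concretely, I would split the $n$ variables into two sets $A$ and $B$ with $|A| = (1 - 4^{-k-1})n$ and $|B| = 4^{-k-1}n$, and for each of the $2^{|A|}$ assignments $\alpha$ to $A$ decide, in $\mathrm{poly}(n, m^{k^2})$ time, whether the branching program restricted by $\alpha$ admits a satisfying extension over $B$. Enumerating the $\alpha$'s depth-first, with one assignment handled at a time, keeps the total space polynomial, as required.

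The choice of the split is the first key step. Since there are at most $kn$ variable-occurrences in the program, an averaging or greedy argument over occurrences should produce a $B$ whose $k|B|$ occurrences are concentrated in a bounded number of localized ``regions'' of the branching program. Once $B$ is isolated in this way, the effect of the $B$-variables on acceptance can, after $\alpha$ is fixed, be summarized by a small collection of local signatures rather than enumerated naively over all $2^{|B|}$ possibilities.

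The second key step, and the main obstacle, is the polynomial-time subroutine. For each $\alpha$, I would run a dynamic program over the nodes of the restricted branching program in topological order, maintaining at each node a signature that captures enough information about local $B$-variable configurations to determine reachability of the accepting sink. The $m^{k^2}$ factor in the running time reflects that each signature is essentially a tuple of length $O(k)$ over $O(m^k)$ possibilities, which the read-$k$ hypothesis keeps polynomial for fixed $k$. The delicate part is to design the split and the signatures in tandem so that the DP genuinely solves the restricted SAT instance in polynomial time and so that the savings work out to exactly $4^{-k-1}$ in the exponent; this is precisely the place where the syntactic read-$k$-times property has to be used in its sharpest form, to bound both the locality of the $B$-variables and the number of signatures that need to be tracked.
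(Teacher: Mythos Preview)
This theorem is not proved in the paper at all: it is quoted verbatim from \cite{NST17} and then used as a black box in the following theorem. There is therefore no proof here to compare your proposal against.

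As a standalone attempt, your proposal is a reasonable outline of the ``split and enumerate'' paradigm, but it is a plan rather than a proof, and the two steps you flag as ``key'' are precisely the entire technical content of the result, which you leave unspecified. First, you assert that an averaging argument will produce a set $B$ of size $4^{-k-1}n$ whose occurrences are ``localized'' in the branching program, but nothing in the read-$k$ hypothesis alone forces any such locality; the actual argument in \cite{NST17} requires a carefully engineered decomposition of the program into segments and a specific choice of $B$ tied to that decomposition, not a generic averaging over occurrence counts. Second, your dynamic program is described only by analogy: you say the state is ``a tuple of length $O(k)$ over $O(m^k)$ possibilities,'' but you give no definition of the state, no transition rule, and no argument that the number of reachable states stays polynomial once $\alpha$ is fixed. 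Without those, there is no reason the restricted instance should be solvable in $\mathrm{poly}(n,m^{k^2})$ time rather than $2^{|B|}$ time, and hence no savings over brute force. In short, you have correctly identified the shape of the argument and where the difficulty lies, but none of the difficulty has actually been overcome.
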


\noindent In a similar outline to the proof of the lower bound,
a satisfiability algorithm for nondeterministic bounded width circuits is
provided.

\begin{theorem}
There exists a deterministic and polynomial space algorithm for
a nondeterministic bounded width circuit SAT with $n$ actual inputs,
size $s$ and width $w$ that runs in time
$O(\mathrm{poly}(n, m^{k^2}) \cdot 2^{(1 - 4^{-k-\frac{3}{2}})n})$,
where $m = 4^ws$ and $k = \lceil \frac{2s}{n} \rceil$.
\end{theorem}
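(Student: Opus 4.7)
The plan is to mirror the reduction used in the proof of Theorem~\ref{thrm:lb}, but on the algorithmic side: reduce the nondeterministic bounded-width circuit SAT to many instances of the nondeterministic syntactic read-$k$-times branching program SAT, so that Theorem~\ref{thrm:nst} can be invoked as a black box.

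First I would split the $n$ actual inputs of $C$ into ``heavy'' and ``light'' variables. Since the total number of nodes labeled by actual inputs is at most $s+1$, a variable labeled more than $k = \lceil 2s/n \rceil$ times contributes at least $k+1$ nodes, so the set $H$ of heavy variables satisfies $|H|(k+1) \leq s+1$, which combined with $k+1 \geq 2s/n + 1$ yields $|H| \leq (s+1)n/(2s+n) \leq n/2$ for $n \geq 2$. The remaining light variables $L$ satisfy $|L| \geq n/2$, and each of them is read at most $k$ times in $C$.

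Next I would enumerate all $2^{|H|} \leq 2^{n/2}$ assignments $\alpha$ to the heavy variables. For each $\alpha$, plugging the constants into $C$ yields a residual nondeterministic circuit $C_\alpha$ of size $\leq s$ and width $\leq w$ whose actual inputs are only the light variables, so $C_\alpha$ is read-$k$-times in the sense of Section~\ref{subsec:lbpr}, and $C$ is satisfiable iff some $C_\alpha$ is. I would then apply Lemma~\ref{lem:conv_rkt} to convert $C_\alpha$ into a nondeterministic syntactic read-$k$-times branching program $B_\alpha$ of size $m = 4^w s$ on $|L|$ variables, and invoke the algorithm of Theorem~\ref{thrm:nst} on $B_\alpha$ in polynomial space and time $O(\mathrm{poly}(n, m^{k^2}) \cdot 2^{(1 - 4^{-k-1})|L|})$. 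Multiplying by the outer enumeration gives $O(\mathrm{poly}(n, m^{k^2}) \cdot 2^{|H| + (1 - 4^{-k-1})|L|}) = O(\mathrm{poly}(n, m^{k^2}) \cdot 2^{n - 4^{-k-1}|L|})$, and using $|L| \geq n/2$ together with the identity $4^{-k-1}/2 = 4^{-k-3/2}$ this is bounded by $O(\mathrm{poly}(n, m^{k^2}) \cdot 2^{(1 - 4^{-k-3/2})n})$, as desired. Polynomial space is preserved because the outer enumeration needs only a counter.

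The main subtlety I anticipate is verifying that Lemma~\ref{lem:conv_rkt} really yields a branching program whose syntactic read-$k$-times condition refers to the light variables alone even though $C_\alpha$ still contains guess inputs. This should go through because the restriction in Section~\ref{subsec:nbwc} allows each guess input to label only one node, so guess inputs are absorbed into local nondeterministic transitions of $B_\alpha$ rather than into repeated reads; nonetheless this point deserves to be spelled out explicitly, along with a check that the size bound $m = 4^w s$ holds uniformly over $\alpha$.
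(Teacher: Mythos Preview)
Your proposal is correct and follows essentially the same approach as the paper's own proof: split the actual inputs into at most $n/2$ heavy variables and at least $n/2$ light (read-$k$-times) variables, brute-force the heavy ones, convert each residual circuit to a nondeterministic syntactic read-$k$-times branching program via Lemma~\ref{lem:conv_rkt}, and invoke Theorem~\ref{thrm:nst}. The paper's proof is stated more tersely (it simply asserts that $n/2$ read-$k$-times variables can be chosen and then brute-forces the rest), while you supply the counting argument for $|H|\le n/2$ and the arithmetic leading to the exponent $1-4^{-k-3/2}$; your closing remark about guess inputs being absorbed into the nondeterminism of $B_\alpha$ is a detail the paper leaves implicit in Lemma~\ref{lem:conv_rkt}.
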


\begin{proof}
Let $C$ be a nondeterministic bounded width circuit.
We choose $n/2$ variables from $x_1, \ldots, x_n$ so as every chosen
variable is read-$\lceil \frac{2s}{n} \rceil$-times in $C$.
For $n/2$ variables which have not been chosen, we execute
the brute-force search.
Then, we use Lemma~\ref{lem:conv_rkt}, and execute the algorithm
in Theorem~\ref{thrm:nst}.
\end{proof}

\bibliographystyle{plain}
\bibliography{circuit}

\begin{thebibliography}{1}

\bibitem{AB09}
Sanjeev Arora and Boaz Barak.
\newblock {\em Computational Complexity - {A} Modern Approach}.
\newblock Cambridge University Press, 2009.

\bibitem{BRS93}
Allan Borodin, Alexander~A. Razborov, and Roman Smolensky.
\newblock On lower bounds for read-k-times branching programs.
\newblock {\em Computational Complexity}, 3:1--18, 1993.

\bibitem{NST17}
Atsuki Nagao, Kazuhisa Seto, and Junichi Teruyama.
\newblock Satisfiability algorithm for syntactic read-$k$-times branching
  programs.
\newblock In {\em Proc. of ISAAC}, pages 58:1--58:10, 2017.

\bibitem{N13}
Jakob Nordstr{\"{o}}m.
\newblock Pebble games, proof complexity, and time-space trade-offs.
\newblock {\em Log. Methods Comput. Sci.}, 9(3), 2013.

\bibitem{V76}
Leslie~G. Valiant.
\newblock Graph-theoretic properties in computational complexity.
\newblock {\em J. Comput. Syst. Sci.}, 13(3):278--285, 1976.

\bibitem{Wil14}
Ryan Williams.
\newblock Algorithms for circuits and circuits for algorithms.
\newblock In {\em Proc. of CCC}, pages 248--261, 2014.

\end{thebibliography}

\end{document}